\documentclass[10pt]{article}
\usepackage{amsmath}
\usepackage{amsfonts}
\usepackage{amssymb}
\usepackage{amsthm}
\usepackage{caption}
\usepackage{subcaption}
\usepackage[ruled,linesnumbered,vlined]{algorithm2e}
\usepackage{tikz}
\usepackage{float}
\usepackage{authblk}
\usetikzlibrary{graphs}
\theoremstyle{definition}

\newtheorem{theorem}{Theorem}[section]

\author[1]{Isabel Beichl}
\affil[1]{National Institute of Standards and Technology}
\author[2]{Alathea Jensen}
\affil[2]{George Mason University}
\date{August 1, 2017}
\begin{document}
\title{A Sequential Importance Sampling Algorithm for Estimating Linear Extensions}
\maketitle
\begin{abstract}
In recent decades, a number of profound theorems concerning approximation of hard counting problems have appeared. These include estimation of the permanent, estimating the volume of a convex polyhedron, and counting (approximately) the number of linear extensions of a partially ordered set.  All of these results have been achieved using probabilistic sampling methods, specifically Monte Carlo Markov Chain (MCMC) techniques.  In each case, a rapidly mixing Markov chain is defined that is guaranteed to produce, with high probability, an accurate result after only a polynomial number of operations. 

Although of polynomial complexity, none of these results lead to a practical computational technique, nor do they claim to. The polynomials are of high degree and a non-trivial amount of computing is required to get even a single sample.  Our aim in this paper is to present practical Monte Carlo methods for one of these problems, counting linear extensions.  Like related work on estimating the coefficients of the reliability polynomial, our technique is based on improving the so-called Knuth counting algorithm by incorporating an importance function into the node selection technique giving a sequential importance sampling (SIS) method.  We define and report performance on two importance functions.
\end{abstract}
\section{Introduction}
For a partially ordered set, any total order which respects the partial order is known as a linear extension.   Determining the number of linear extensions of a given partially ordered set (poset) is a fundamental problem in the study of ordering with many applications.

The most common applications are in scheduling problems, where the number of linear extensions gives the size of the solution space.  For example, if we are trying to find a schedule that is optimal with respect to some cost function, the size of the solution space can be valuable in deciding how to carry out the search.  If the solution space is small, then we may be able to perform an exhaustive search for the best order, but if the solution space is large, then we may wish to settle for an approximate solution.

Computing the number of linear extensions exactly is $\#$P-complete \cite{BW}, but there are fully polynomial-time randomized approximation schemes to estimate the number, such as the Markov Chain Monte Carlo (MCMC) method presented in \cite{BW}.  These algorithms, while of great theoretical importance, are not practical because the polynomials are of high degree, even for the improved versions due to Bubbly-Dyer \cite{bubdye} and Wilson \cite{wilson}.

Our aim is a practical method based on sequential importance sampling (SIS) in the spirit of the applications discussed in \cite{jcp} and \cite{algo}. Our method may have exponential rather than polynomial complexity in some cases, but, as is typical of SIS, a modest amount of computation is often sufficient to learn at least something plausible about the number of linear extensions for any poset of interest.  This is not the case for MCMC methods where significant computation may be required to obtain even a single sample chosen from a probability distribution close to the limit distribution.

In Section \ref{Preliminaries}, we give definitions and notation necessary to describe our algorithm.  In Section \ref{Algorithm}, we give the algorithm and explain how it works.  The algorithm takes an importance function as an input, and in Section \ref{ImportanceFunction} we discuss the relative merits of several importance functions.  In Section \ref{Variance}, we give an expression for the variance of the estimates produced by the algorithm and discuss methods for bounding the variance.  In Section \ref{Improvements}, we discuss possible improvements to the algorithm, and in Section \ref{NumericalResults} we give results of our numerical experiments with randomly generated posets.

\section{Preliminaries} \label{Preliminaries}

For a partially ordered set (poset), $(P,\leq)$, we use the notation $x\leq y$ to mean that $x$ \emph{precedes} $y$ in the partial order or that $x=y$.  If $x$ and $y$ are not equal but $x\leq y$, we write $x<y$.  If $x<y$ and there is no $z$ such that $x<z<y$, we say that $y$ \emph{covers} $x$.

By the \emph{descendants} of an element $v\in P$ we mean all $w\in P$ such that $w\leq v$.  We will refer to the number of descendants of $v\in P$ as $d(v)$.  By the \emph{ancestors} of $v\in P$ we mean all $w\in P$ such that $v\leq w$.

A \emph{linear extension} of $P$ is any permutation $\sigma:P\to \{1,2,\dots,|P|\}$ such that for all $x,y\in P$, $x\leq y$ implies that $\sigma(x)\leq\sigma(y)$.  We denote the set of all linear extensions of a poset $P$ by $\Lambda(P)$, and the number of linear extensions by $|\Lambda(P)|$. If $P$ has $n$ elements, then $|\Lambda(P)|$ is some integer between $1$ and $n!$.

In practice, posets are often arise from directed acyclic graphs (DAGs), and in this context, a linear extension is usually called a \emph{topological ordering} or \emph{topological sort}.  Each DAG has a unique poset that corresponds to its reachability relation; that is, $w\leq v$ in the poset if and only if there is a directed walk from $v$ to $w$ in the DAG.

Determining the reachability poset of a DAG amounts to taking the transitive reduction or the transitive closure of the DAG.  This can be accomplished by boolean matrix multiplication and hence has the same complexity as that procedure \cite{AGU}.

In general, different DAGs can have the same reachability relation, but we can create a unique DAG from a poset by directing an edge from $v$ to $w$ if and only if $v$ covers $w$ in the poset.

\section{The Algorithm} \label{Algorithm}

Generating a single linear extension of a poset is well-understood and can be done in time $O(n)$ for a poset with $n$ elements. For example, any depth-first or breadth-first search of a DAG will generate a linear extension as a side-effect.  For a breadth-first search, we can order the elements by their first visit times in ascending order, and for a depth-first search, we can order the elements by their last visit times in descending order.  A greedy search will also generate a linear extension.

The standard method of obtaining a linear extension, which was first described in a 1962 paper by Kahn \cite{kahn}, corresponds to a breadth-first search.  The procedure is as follows.  We choose some minimal element of the poset and delete it, possibly causing some new elements to become minimal.  We choose another minimal element of the poset and delete it.  We repeat this until there are no elements left in the poset.  The order in which the elements were deleted is then a linear extension of the poset.

We can also perform the same procedure with maximal elements instead of minimal, and in this case the reverse order of the deletions gives the linear extension.  In this paper, we will use maximal elements instead of minimal.

Since all linear extensions can be obtained by sequential deletion of maximal elements, $|\Lambda(P)|$ is equal to the number of ways to execute this procedure.  We can think of the choices available to us at each step as forming a decision tree, where each branching corresponds to a choice of a maximal element, and each leaf corresponds to one complete linear extension.  Hence the number of leaves is $|\Lambda(P)|$.

In the classic Knuth algorithm \cite{knuth} for estimating the number of leaves of a tree, we walk a path from the root to a leaf of the decision tree by choosing which branch to take at each level uniformly at random from the branches available.  We then take the product of the number of branches available at each step along the path to be an estimate of the number of leaves in the tree.  Knuth showed that this estimate is unbiased, and so the mean of the estimates over many samples will approach the number of leaves.

In our case, this means that to produce an estimate of $|\Lambda(P)|$, we begin our estimate with the number of maximal elements in the poset, then delete one of the maximal elements uniformly at random.  We then multiply our estimate by the new number of maximal elements, and again delete one of the maximal elements uniformly at random.  We continue in this fashion until the poset is empty.

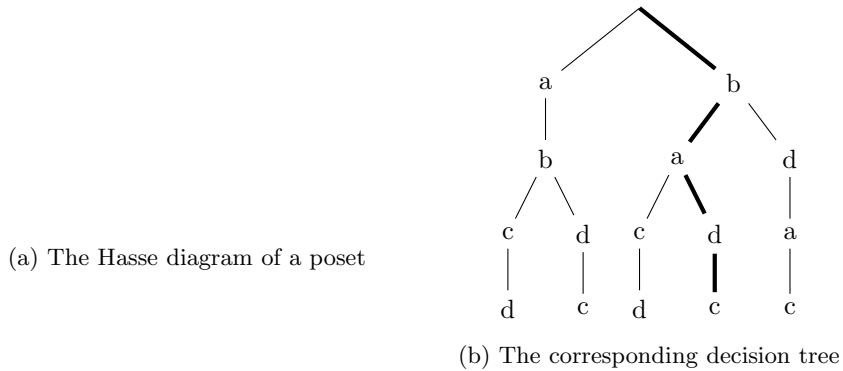
\begin{figure}[h]
\begin{subfigure}{0.5\textwidth}
\centering
\begin{tikzpicture}
\tikz \graph [no placement, nodes = {circle,draw}]{
a[x=0,y=1] -- c[x=1,y=0];
b[x=2,y=1] -- c;
b -- d[x=3,y=0];
};
\end{tikzpicture}
\subcaption{The Hasse diagram of a poset}
\end{subfigure}
\begin{subfigure}{0.5\textwidth}
\centering
\begin{tikzpicture}[
level distance = 10mm,
level 1/.style={sibling distance=25mm},
level 2/.style={sibling distance=15mm},
level 3/.style={sibling distance=10mm}]
\coordinate
child {
	node {a}
	child {
		node {b}
		child {node {c} child {node {d}}}
		child {node {d} child {node {c}}}
		}
	}
child[ultra thick] {
	node {b}
	child {
		node {a}
		child[thin] {node {c} child {node {d}}}
		child {node {d} child {node {c}}}
		}
	child[thin] {
		node {d}
		child {node {a} child {node {c}}}
		}
	};
\end{tikzpicture}
\subcaption{The corresponding decision tree}
\end{subfigure}
\caption{The highlighted path gives an estimate of $2\cdot 2\cdot 2\cdot 1=8$ in the classic Knuth algorithm.}
\end{figure}

One of the chief difficulties with Knuth's algorithm is that if the decision tree is far from uniform, the variance of the estimates may be large, and an exponential number of samples will be required for an accurate estimate.  Unfortunately, the degree of non-uniformity is unknown in general.  One way to mitigate the problem of non-uniform decision trees is to choose paths through the tree non-uniformly, according to an importance function.  This is known as \textit{sequential importance sampling} (SIS).

Consider a function $r$ from the elements of a poset $P$ to the positive real numbers.  We will refer to $r$ as the \emph{importance function} and to $r(v)$ as the \emph{importance} of $v\in P$.  By the importance of a set $S\subseteq P$, denoted $r(S)$, we mean the sum over $S$ of the importance of its elements.

To obtain an estimate of $|\Lambda(P)|$ using SIS, we again as before traverse a single path from the root to a leaf of the decision tree.  However, at each decision point, instead of choosing from among the maximal elements uniformly at random, we instead choose a maximal element with probability proportional to its importance.

To be more precise, if $v$ is a maximal element and $S$ is the set of all currently maximal elements, we choose $v$ from $S$ with probability $p(v) = r(v)/r(S)$.  The factor multiplied into our estimate at that branch point is then $1/p(v)$ rather than $|S|$, and our estimate of $|\Lambda(P)|$ is the product over each $v$ chosen of $1/p(v)=r(S)/r(v)$.  

For example, in the decision tree shown in figure 1, the highlighted path gives an estimate of
\[\left(\frac{r(a)+r(b)}{r(b)}\right)\left(\frac{r(a)+r(d)}{r(a)}\right)\left(\frac{r(c)+r(d)}{r(d)}\right)\left(\frac{r(c)}{r(c)}\right)\]

Because the probability of obtaining any given linear extension is the reciprocal of the estimate associated with that linear extension, when we take the expected value of the estimates, each linear extension contributes exactly 1 to the sum, and so the expected value of the estimates is $|\Lambda(P)|$.  Hence SIS gives an unbiased estimate for $|\Lambda(P)|$.

Pseudocode for finding a single estimate using this method appears in Algorithm \ref{SEA}.  For a poset with $n$ elements, the outer loop runs exactly $n-1$ times, while the inner loops run at most $n$ times.  Finding maximal elements can also be accomplished in linear time.  Hence the complexity of the algorithm is $O(n^2)$.

\begin{algorithm}[H]
\DontPrintSemicolon
\SetKwInOut{Input}{Input}\SetKwInOut{Output}{Output}
\Input{Poset $P$ and importance function $r:P\to \mathbb{R}_{>0}$}
\Output{An estimate of the number of linear extensions of $P$}
\BlankLine
$est \gets 1$\;
\While{$|P|>1$}
{
	$maxes\gets$ maximal elements of $P$\;
	$maxrsum\gets 0$\;
	\ForEach{$max\in maxes$}
	{
			$maxrsum\gets maxrsum+r(max)$\;
	}
	$randnum\gets$ random real number between $0$ and $maxrsum$\;
	$chosenmax\gets$ first element of $maxes$\;
	$currentrsum\gets r(chosenmax)$\;
	\While{$currentrsum<randnum$}
	{
		$chosenmax\gets$ next element of $maxes$\;
		$currentrsum\gets currentrsum+r(chosenmax)$\;		
	}
	$est\gets est\cdot maxrsum/r(chosenmax)$\;
	Delete element $chosenmax$ from $P$\;
}
	\Return est\;
\caption{Single Estimate Algorithm}\label{SEA}
\end{algorithm}

\section{Variance}  \label{Variance}
Here we calculate the relative variance of our estimates.  Note that by ``relative variance'' we mean the ratio of the variance to the square of the mean.  This is also the square of the coefficient of variation.  We denote this quantity $\mathrm{RV}(P)$ for a poset $P$.

The relative variance of our algorithm is given explicitly by the following theorem.

\begin{theorem}\label{rvexplicit}
For a poset $P$ with $L$ linear extensions, the relative variance of the estimates given by Algorithm \ref{SEA} is
\[\mathrm{RV}(P)=\left\langle\frac{f_P(\lambda)}{L}\right\rangle_u-1\]
where $f_P(\lambda)$ is the estimate associated with linear extension $\lambda$ and $\langle\cdot\rangle_u$ denotes the mean of the uniform distribution over all linear extensions.
\end{theorem}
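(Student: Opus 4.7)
The plan is to unpack the definition of relative variance and use two facts already established in the text: the estimator $f_P$ is unbiased with mean $L$, and the probability of sampling linear extension $\lambda$ under the SIS procedure is exactly $1/f_P(\lambda)$ (this was the key observation used in the unbiasedness argument, since $f_P(\lambda)$ is the product of the reciprocals $1/p(v)$ along the path producing $\lambda$).

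First I would write
\[
\mathrm{RV}(P) \;=\; \frac{\mathrm{Var}(f_P)}{L^2} \;=\; \frac{E[f_P(\lambda)^2] - L^2}{L^2} \;=\; \frac{E[f_P(\lambda)^2]}{L^2} - 1,
\]
so the entire statement reduces to showing $E[f_P(\lambda)^2] = \sum_{\lambda \in \Lambda(P)} f_P(\lambda)$, because dividing that sum by $L$ yields exactly $L \cdot \langle f_P(\lambda)/L\rangle_u$, and dividing once more by $L$ puts it in the required form.

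Second, I would compute the second moment directly, summing over linear extensions with the sampling weight:
\[
E[f_P(\lambda)^2] \;=\; \sum_{\lambda \in \Lambda(P)} \Pr[\lambda]\, f_P(\lambda)^2 \;=\; \sum_{\lambda \in \Lambda(P)} \frac{1}{f_P(\lambda)}\, f_P(\lambda)^2 \;=\; \sum_{\lambda \in \Lambda(P)} f_P(\lambda).
\]
Rewriting this sum as $L$ times the uniform average over $\Lambda(P)$ and substituting into the expression for $\mathrm{RV}(P)$ above gives the claimed identity.

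The only real content is the identity $\Pr[\lambda] = 1/f_P(\lambda)$, and this is already justified in the paragraph preceding the theorem (it is the same cancellation that makes the estimator unbiased). So I do not expect a genuine obstacle here; the main thing to be careful about is to state clearly that the expectation defining $\mathrm{Var}(f_P)$ is with respect to the SIS sampling distribution on $\Lambda(P)$, while $\langle\cdot\rangle_u$ on the right-hand side is with respect to the uniform distribution on $\Lambda(P)$, and that the $1/f_P(\lambda)$ weight in the former converts neatly into uniform weighting of an extra factor of $f_P(\lambda)$.
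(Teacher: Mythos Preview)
Your proposal is correct and essentially identical to the paper's own proof: both use $\Pr[\lambda]=1/f_P(\lambda)$ to reduce the second moment to $\sum_{\lambda\in\Lambda(P)} f_P(\lambda)$, then divide by $L^2$ and recognize the result as a uniform average minus one. The only cosmetic difference is the order of presentation; no substantive deviation.
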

\begin{proof}
Let $\Lambda(P)$ be the set of all linear extensions of poset $P$, let $f_P(\lambda)$ be the SIS estimate associated with extension $\lambda\in \Lambda(P)$, and let $L=|\Lambda(P)|$.  Note that $L$ is also the expected value of $f_P(\lambda)$.

Recall that the probability of selecting extension $\lambda$ is precisely $1/f_P(\lambda)$.  Hence
\[\mathbb{E}[f_P(\lambda)^2]=\sum_{\lambda\in \Lambda(P)}{f_P(\lambda)^2\frac{1}{f_P(\lambda)}}=\sum_{\lambda\in \Lambda(P)}{f_P(\lambda)}\]

Then, using our definition of relative variance, we have
\begin{equation*}
\begin{split}
\mathrm{RV}(P)&=\frac{\mathbb{E}[f_P(\lambda)^2]-\mathbb{E}[f_P(\lambda)]^2}{\mathbb{E}[f_P(\lambda)]^2}\\
&=\frac{\mathbb{E}[f_P(\lambda)^2]}{L^2}-1\\
&=\frac{1}{L^2}\left(\sum_{\lambda\in \Lambda(P)}{f_P(\lambda)}\right)-1\\
&=\frac{1}{L}\left(\sum_{\lambda\in \Lambda(P)}{\frac{f_P(\lambda)}{L}}\right)-1\\
&=\left\langle\frac{f_P(\lambda)}{L}\right\rangle_u-1
\end{split}
\end{equation*}
\end{proof}

We can also express the relative variance as a recursive function of the relative variance of certain sub-posets, as detailed in the following theorem.

\begin{theorem}\label{rvrecurthm}
For a poset $P$, let $M$ be the set of maximal elements of $P$, where for $m\in M$, $L_m$ is the number of linear extensions of $P$ that begin with $m$.  Then the relative variance of the estimates produced by Algorithm \ref{SEA} using importance function $r$ is given by
\[\mathrm{RV}(P)+1=\frac{r(M)}{L^2}\sum_{m\in M}{\frac{L_m^2}{r(m)}\big(\mathrm{RV}(P\setminus m)+1\big)}\]
\end{theorem}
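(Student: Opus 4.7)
The plan is to build the recursion directly on the closed form for $\mathrm{RV}(P)+1$ furnished by Theorem \ref{rvexplicit}, namely
\[\mathrm{RV}(P)+1=\frac{1}{L^2}\sum_{\lambda\in\Lambda(P)}f_P(\lambda),\]
and to partition the sum over $\Lambda(P)$ according to the first element deleted.

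First I would observe how the estimate $f_P(\lambda)$ factors when $\lambda$ begins with a specific maximal element $m\in M$. By the definition of the algorithm, the very first multiplicative factor contributed to $f_P(\lambda)$ is $r(M)/r(m)$, and the remainder of the traversal is precisely the execution of Algorithm \ref{SEA} on the sub-poset $P\setminus m$ along the ``tail'' linear extension $\lambda'\in\Lambda(P\setminus m)$ obtained by dropping $m$ from $\lambda$. Hence
\[f_P(\lambda)=\frac{r(M)}{r(m)}\,f_{P\setminus m}(\lambda'),\]
and the map $\lambda\mapsto\lambda'$ is a bijection between linear extensions of $P$ beginning with $m$ and all of $\Lambda(P\setminus m)$.

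Next I would split the sum and apply Theorem \ref{rvexplicit} a second time, now to each sub-poset $P\setminus m$:
\begin{equation*}
\begin{split}
\sum_{\lambda\in\Lambda(P)}f_P(\lambda)
&=\sum_{m\in M}\sum_{\lambda'\in\Lambda(P\setminus m)}\frac{r(M)}{r(m)}\,f_{P\setminus m}(\lambda')\\
&=r(M)\sum_{m\in M}\frac{1}{r(m)}\cdot L_m^2\bigl(\mathrm{RV}(P\setminus m)+1\bigr),
\end{split}
\end{equation*}
where the second equality uses Theorem \ref{rvexplicit} in the form $\sum_{\lambda'}f_{P\setminus m}(\lambda')=L_m^2(\mathrm{RV}(P\setminus m)+1)$. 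Dividing by $L^2$ yields the claimed identity.

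I do not expect a serious obstacle: the proof is essentially a one-step unfolding of the recursive structure of the algorithm combined with Theorem \ref{rvexplicit}. The one place to be careful is the factorization of $f_P(\lambda)$, where I would want to verify explicitly that ``maximal elements of $P\setminus m$'' really is the set that the algorithm sees on its second iteration when $m$ was deleted first, so that $f_{P\setminus m}(\lambda')$ matches the tail of $f_P(\lambda)$ factor-for-factor; and the bookkeeping that $L=\sum_{m\in M}L_m$ is only implicit but is what makes $L_m^2/L^2$ the correct weighting, not $L_m/L$.
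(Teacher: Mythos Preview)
Your proposal is correct and essentially identical to the paper's own proof: both start from the identity $\mathrm{RV}(P)+1=L^{-2}\sum_{\lambda\in\Lambda(P)}f_P(\lambda)$ of Theorem~\ref{rvexplicit}, invoke the factorization $f_P(\lambda)=\frac{r(M)}{r(m)}f_{P\setminus m}(\lambda')$ for extensions beginning with $m$, partition the sum by the first deleted element, and then apply Theorem~\ref{rvexplicit} once more to each $P\setminus m$. One minor note: your closing remark that $L=\sum_{m}L_m$ is ``what makes $L_m^2/L^2$ the correct weighting'' is a red herring---that identity is nowhere needed, since the $L_m^2$ arises directly from Theorem~\ref{rvexplicit} applied to $P\setminus m$ (which has exactly $L_m$ extensions).
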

\begin{proof}
First, observe that the linear extensions of $P$ that begin with a particular $m\in M$ are equivalent to the extensions in $\Lambda(P\setminus m)$, but with $m$ appended to the beginning of each extension in $\Lambda(P\setminus m)$.

Furthermore, $\lambda$ is an extension in $\Lambda(P)$ that begins with $m$, then the estimate produced by Algorithm \ref{SEA} for $\lambda$ in $P$ is related to the estimate for the equivalent extension, call it $\lambda'$, in $\Lambda(P\setminus m)$, by the following formula.
\begin{equation}\label{frecur}
f_P(\lambda) = \frac{r(M)}{r(m)}f_{P\setminus m}(\lambda')
\end{equation}

Note that every extension in $\Lambda(P)$ begins with some element $m\in M$, and so for each $\lambda\in\Lambda(P)$, there is $m\in M$ such that equation (\ref{frecur}) holds.

In the proof of Theorem \ref{rvexplicit} we showed that
\begin{equation}\label{rvp1explicit}
\mathrm{RV}(P)+1=\frac{1}{L^2}\sum_{\lambda\in \Lambda(P)}{f_P(\lambda)}
\end{equation}

Then, by substituting equation (\ref{frecur}) into equation (\ref{rvp1explicit}) and manipulating the summation expressions, we get

\begin{equation}\label{rvp1recur)}
\begin{split}
\mathrm{RV}(P)+1 &=\frac{1}{L^2}\sum_{m\in M}{\sum_{\lambda\in \Lambda(P\setminus m)}{\frac{r(M)}{r(m)}f_{P\setminus m}(\lambda)}} \\
&= \frac{r(M)}{L^2}\sum_{m\in M}{\frac{1}{r(m)}\sum_{\lambda\in \Lambda(P\setminus m)}{f_{P\setminus m}(\lambda)}} \\
&= \frac{r(M)}{L^2}\sum_{m\in M}{\frac{L_m^2}{r(m)}\frac{1}{L_m^2}\sum_{\lambda\in \Lambda(P\setminus m)}{f_{P\setminus m}(\lambda)}} \\
\end{split}
\end{equation}

Theorem \ref{rvexplicit} tells us that
\[\mathrm{RV}(P\setminus m)+1=\frac{1}{L_m^2}\sum_{\lambda\in \Lambda(P\setminus m)}{f_{P\setminus m}(\lambda)}\]

Hence, from the last line of equation (\ref{rvp1recur)}), we have
\[\mathrm{RV}(P)+1 =\frac{r(M)}{L^2}\sum_{m\in M}{\frac{L_m^2}{r(m)}\big(\mathrm{RV}(P\setminus m)+1\big)}\]
\end{proof}

This recursive form of the relative variance suggests a way to bound it, which is described in our next theorem.

\begin{theorem}\label{relvarbound}
Let $\mathcal{P}_i$ be the set of all posets of size $i$.  Let $A_i$ be given by
\[A_i:=\max_{P\in\mathcal{P}_i}\left(\max_{m\in M}\frac{r(M)}{r(m)}\frac{L_m}{L}\right)\]
where $M$, $L$, and $L_m$ are defined as before with respect to $P$.  Then for all posets $P$ of size $n$, the relative variance of the estimates produced by Algorithm \ref{SEA} using importance function $r$ is bounded by
\[\mathrm{RV}(P)\leq A_1 A_2\cdots A_n-1\]
\end{theorem}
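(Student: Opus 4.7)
The plan is to induct on $n = |P|$ and feed the recursion from Theorem \ref{rvrecurthm} into itself. The key algebraic observation is that the factor $\frac{L_m^2}{r(m)}$ appearing in the recursion can be split so that one copy of $L_m/L$ pairs with $r(M)/r(m)$ to produce the exact expression that $A_n$ bounds, while the second copy of $L_m/L$ serves as a probability weight summing to $1$.

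More precisely, for the inductive step I would start from Theorem \ref{rvrecurthm} and rewrite
\[\mathrm{RV}(P)+1 = \sum_{m\in M}\left(\frac{r(M)}{r(m)}\frac{L_m}{L}\right)\frac{L_m}{L}\bigl(\mathrm{RV}(P\setminus m)+1\bigr).\]
By definition of $A_n$ applied to $P\in\mathcal{P}_n$, each parenthesized factor is at most $A_n$, so
\[\mathrm{RV}(P)+1 \leq A_n \sum_{m\in M}\frac{L_m}{L}\bigl(\mathrm{RV}(P\setminus m)+1\bigr).\]
Each $P\setminus m$ has size $n-1$, so the inductive hypothesis gives $\mathrm{RV}(P\setminus m)+1 \leq A_1 A_2\cdots A_{n-1}$. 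Since every linear extension of $P$ begins with exactly one maximal element, $\sum_{m\in M} L_m = L$, hence $\sum_{m\in M} L_m/L = 1$. Plugging these two facts in yields $\mathrm{RV}(P)+1 \leq A_1 A_2 \cdots A_n$, which is the desired bound.

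For the base case $n=1$, the only poset has a single element, which gives $L=1$ and one linear extension with estimate $1$, so $\mathrm{RV}(P) = 0$; meanwhile $A_1 = 1$ since $M = \{m\}$ forces $r(M)/r(m) = L_m/L = 1$, so the bound $\mathrm{RV}(P)\leq A_1 - 1 = 0$ holds with equality.

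I don't expect any real obstacle here: once one spots the trick of splitting $L_m^2/L^2$ as $\bigl(\tfrac{L_m}{L}\bigr)\bigl(\tfrac{L_m}{L}\bigr)$ so that $A_n$ exactly absorbs one copy together with $r(M)/r(m)$, the induction is essentially bookkeeping. The only thing to double-check is that the quantifiers in the definition of $A_i$ — maximizing over all posets of size $i$ and all maximal elements — are strong enough to cover the subposets $P\setminus m$ encountered in the recursion, which they are by construction.
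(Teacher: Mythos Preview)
Your proposal is correct and follows essentially the same route as the paper's own proof: induction on $|P|$, the recursion from Theorem~\ref{rvrecurthm}, splitting $L_m^2/L^2$ so that one copy of $L_m/L$ combines with $r(M)/r(m)$ to be bounded by $A_n$ while the other acts as a probability weight summing to $1$. The only cosmetic difference is the order in which you apply the two bounds (you invoke $A_n$ before the inductive hypothesis, the paper does the reverse), which is immaterial.
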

\begin{proof}
The proof is by induction on the size of the poset $P$.

For the basis step, consider a poset of size 1.  The single element, call it $m$, is also the only maximum, and so $r(m)=r(M)$ and $L_m=L$.  Thus $A_1=1$.  Since this poset has only one possible linear extension, the algorithm always produces the same estimate, making the relative variance zero.  Thus the theorem statement is true for $n=1$.

Assume the theorem statement is true for all posets of size $n-1$ and less.  Then for a poset $P$ of size $n$, the theorem statement holds for the sub-posets $P\setminus m$, where $m\in M$.  That means
\[\mathrm{RV}(P\setminus m)+1\leq A_1 A_2\cdots A_{n-1}\]

Then, beginning with Theorem \ref{rvrecurthm}, we have
\begin{equation*}
\begin{split}
\mathrm{RV}(P)+1 &=\frac{r(M)}{L^2}\sum_{m\in M}{\frac{L_m^2}{r(m)}\big(\mathrm{RV}(P\setminus m)+1\big)} \\
&\leq \frac{r(M)}{L^2}\sum_{m\in M}{\frac{L_m^2}{r(m)}A_1 A_2\cdots A_{n-1}} \\
&= \sum_{m\in M}{\frac{L_m}{L}\frac{r(M)}{r(m)}\frac{L_m}{L}A_1 A_2\cdots A_{n-1}} \\
&\leq \sum_{m\in M}{\frac{L_m}{L}\max_{P\in\mathcal{P}_i}\left(\max_{m\in M}\frac{r(M)}{r(m)}\frac{L_m}{L}\right)A_1 A_2\cdots A_{n-1}} \\
&= \sum_{m\in M}{\frac{L_m}{L}A_1 A_2\cdots A_{n}} \\
&= A_1 A_2\cdots A_{n}\sum_{m\in M}{\frac{L_m}{L}} \\
\end{split}
\end{equation*}

Since every extension of $P$ must start with some $m\in M$,
\[\sum_{m\in M}{\frac{L_m}{L}}=1\]

Hence,
\[\mathrm{RV}(P)+1\leq A_1 A_2\cdots A_n\]
\end{proof}

From Theorem \ref{relvarbound}, we can see that the ideal importance function is $r(m)=L_m$, that is, the importance function which maps each maximal element to the number of extensions that begin with that element.  This importance function would give us $A_i=1$ for all $i\in\mathbb{N}$, making $\mathrm{RV}(P)=0$ for all posets $P$.

Of course, this ideal importance function is impossible to achieve, since if we knew its values, we would not need an estimation algorithm.  In Section \ref{ImportanceFunction}, we discuss more practical importance functions.

\section{The Importance Function}  \label{ImportanceFunction}

If the importance function $r$ is uniform across all elements at each decision point, then our algorithm is simply Knuth's algorithm again.  On the other hand, if a good importance function is available, one that reflects the actual structure of the decision tree, excellent results can sometimes be obtained, as in \cite{jcp}, \cite{algo}.  Here we present two importance functions, each with its own strengths and weaknesses, which we will discuss.

\subsection{The Number of Descendants}

For the first importance function, we define $r(v)$ to be $d(v)$, the number of descendants of $v$, including $v$ itself.  An example is shown in Figure \ref{impfcnexample}.

\begin{figure}
\begin{subfigure}{0.5\textwidth}
\centering
\begin{tikzpicture}
\tikz \graph [no placement, nodes = {circle,draw}]{
a[x=0,y=1,label=$2$] -- d[x=1,y=0];
b[x=2,y=1,label=$3$] -- d;
b -- e[x=3,y=0];
c[x=4,y=1,label=$1$];
};
\end{tikzpicture}
\subcaption{Number of descendants}
\end{subfigure}
\begin{subfigure}{0.5\textwidth}
\centering
\begin{tikzpicture}
\tikz \graph [no placement, nodes = {circle,draw}]{
a[x=0,y=1,label=$\frac{5+1}{5-1}$] -- d[x=1,y=0];
b[x=2,y=1,label=$\frac{5+2}{5-2}$] -- d;
b -- e[x=3,y=0];
c[x=4,y=1,label=$\frac{5+0}{5-0}$];
};
\end{tikzpicture}
\subcaption{Available spaces quotient}
\end{subfigure}
\caption{Example poset with maximal elements labeled with importance function values.}
\label{impfcnexample}
\end{figure}

The idea for this importance function came from the observation that each vertex must appear in a linear extension before all of its descendants, which themselves have fewer descendants.  In fact, if we order the vertices by decreasing number of descendants, we obtain a linear extension.

We also obtain a useful lower bound from this importance function.  The sum over the maximal elements of the number of their descendants must always be at least the number of vertices which have not been chosen yet, hence a lower bound for our samples and so for $|\Lambda(P)|$ itself is

\[|\Lambda(P)|\geq\frac{n!}{d(v_1)d(v_2)\cdots d(v_n)}\]

Here the denominator is the product over all $v_i\in P$ and is simply the observation that every $v_i$ is processed exactly once when generating a sample.

If the poset can be represented by a DAG that is a forest, then no maximal elements share descendants, and so every sample is equal to this lower bound, which is therefore exact.  In fact, these observations serve as an alternative proof of the formula for the number of linear extensions of a forest.  We also note in passing that this gives a way to sample from linear extensions of a forest uniformly at random.

In addition to a lower bound, this importance function also gives us an upper bound for each sample.  Consider that when constructing a linear extension, each element of the poset becomes maximal exactly once.  An element $v$ which was not maximal in the original poset $P$ becomes maximal when its last remaining ancestor and the ancestor's edge to $v$ are deleted.  If we collect this edge for each element of the poset, we have a spanning forest, $F$, of the poset $P$.

The number of linear extensions of this forest is clearly an upper bound for $|\Lambda(P)|$, and can be calculated exactly as

\[|\Lambda(F)|=\frac{n!}{d_F(v_1)d_F(v_2)\cdots d_F(v_n)}\]

where $d_F(v)$ is the number of descendants of $v$ in $F$.  Hence this upper bound depends on which forest we construct.

\subsection{The Available Spaces Quotient}

The second importance function we will discuss is

\[r(v) = \frac{i+d(v)-1}{i-d(v)+1}\]

where $i$ is the number of elements that have not yet been added to the linear extension.

Unlike the first importance function, this importance function changes depending on how far along in constructing the linear extension we are.  This clearly adds to the running time of the algorithm, but it does not increase the order of its complexity.

The motivation for this function came from the intuition that in addition to favoring elements with more descendants, we should also favor elements with fewer available spaces left.  At any point during the construction of a linear extension, if there are $i$ elements left to be added to the extension, then the number of available spaces for element $v$ is $i-d(v)+1$, since $v$ must be chosen before all of its proper descendants.

By placing the formula for the number of available spaces in the denominator, the importance of an element increases dramatically as the number of spaces available for that element decreases.  The particular formula for the numerator, on the other hand, was chosen as a result of numerical experimentation.

This importance function, unlike the first, does not give exact estimates for any known set of posets, nor has it given us any bounds on the number of linear extensions.  Its sole virtue is that it significantly reduces the observed variance of estimates in numerical testing.  See Section \ref{NumericalResults} for numerical results.

\section{Improvements on the Algorithm}  \label{Improvements}

Here we discuss a method of reducing the variance of our estimates by modifying the way in which we construct linear extensions.

Consider a DAG $D$ that, when undirected, consists of several connected components $D_1, D_2, ..., D_k$.  Each linear extension of $D$ can be obtained by first partitioning the positions $1, 2, ..., n$ into sets of size $|D_1|, |D_2|, ..., |D_n|$, and then filling the positions selected for each component with a linear extension of that component. Hence

\[ |\Lambda(D)| = \frac{|D|!}{|D_1|!|D_2|!\cdots |D_k|!}|\Lambda(D_1)|\cdot |\Lambda(D_2)|\cdots |\Lambda(D_k)| \]

This suggests a recursive algorithm whose pseudocode appears in Algorithm \ref{CC}.
\\
\\
\begin{algorithm}[H]
\DontPrintSemicolon

\SetKwFunction{fcn}{RecursiveEstimate}

\SetKwInOut{Input}{Input}\SetKwInOut{Output}{Output}
\fcn{$P$}\;
\Input{Poset $P$ and importance function $r:P\to \mathbb{R}_{>0}$}
\Output{An estimate of the number of linear extensions of $P$}

\BlankLine
\eIf{$|P|=1$} {$estimate \leftarrow 1$}
{
	Find connected components, $P_1, P_2, ..., P_k$ of $P$\;
	$estimate \leftarrow |P|!/\left(|P_1|!|P_2|!\cdots |P_k|!\right)$\;
	\For{$i=1$ to $k$}
	{
		$maxes\gets$ maximal elements of $P_i$\;
		$maxrsum\gets 0$\;
		\ForEach{$max\in maxes$}
		{
				$maxrsum\gets maxrsum+r(max)$\;
		}
		$randnum\leftarrow$ random real number between $0$ and $maxrsum$\;
		$chosenmax\leftarrow$ first element of $maxes$\;
		$currentrsum\leftarrow r(chosenmax)$\;
		\While{$currentrsum<randnum$}
		{
			$chosenmax\leftarrow$ next element of $maxes$\;
			$currentrsum\leftarrow currentrsum+r(chosenmax)$\;
		}
		$estimate\leftarrow estimate\cdot maxrsum/r(chosenmax)$\;
		Delete $chosenmax$ from $P_i$\;
		$estimate\leftarrow estimate\cdot$\fcn{$P_i$}\;
	}
}
\caption{Recursive Connected Components Algorithm}\label{CC}
\end{algorithm}

The order of the time complexity for Algorithm \ref{CC} is, at worst, $O(n^3)$.  This is because we have added the additional step of searching for connected components, which executes $n-1$ times and in the worst case takes quadratic time in the size of the poset being searched.  See Section \ref{NumericalResults} for numerical results on the variance of Algorithm \ref{CC}.

\section{Numerical Results}  \label{NumericalResults}

Numerical tests of Algorithm 1 were implemented in C++ using a sparse representation of the posets.  Given the poset elements $v_1,v_2,\dots,v_n$, for each pair of elements $v_i$ and $v_j$ with $i<j$, the relation $v_i>v_j$ was given a 20\% probability to exist using a pseudorandom number generator.  The posets were then transitively completed.

For each value of $n$ from $10,15,\dots,150$, we generated $n^2$ posets in this manner, and $n^2$ SIS estimates were performed on each poset.  The relative variance of the estimates was calculated for each poset, and the results averaged for each value of $n$.

The results for the different importance functions were of differing orders of magnitude; therefore, they are compared in Figure \ref{numresults1} on a log-log scale.  Linearity on a log-log scale suggests a power function relationship where the exponent is the slope of the line.

\begin{figure}[H]
\includegraphics[width=\textwidth]{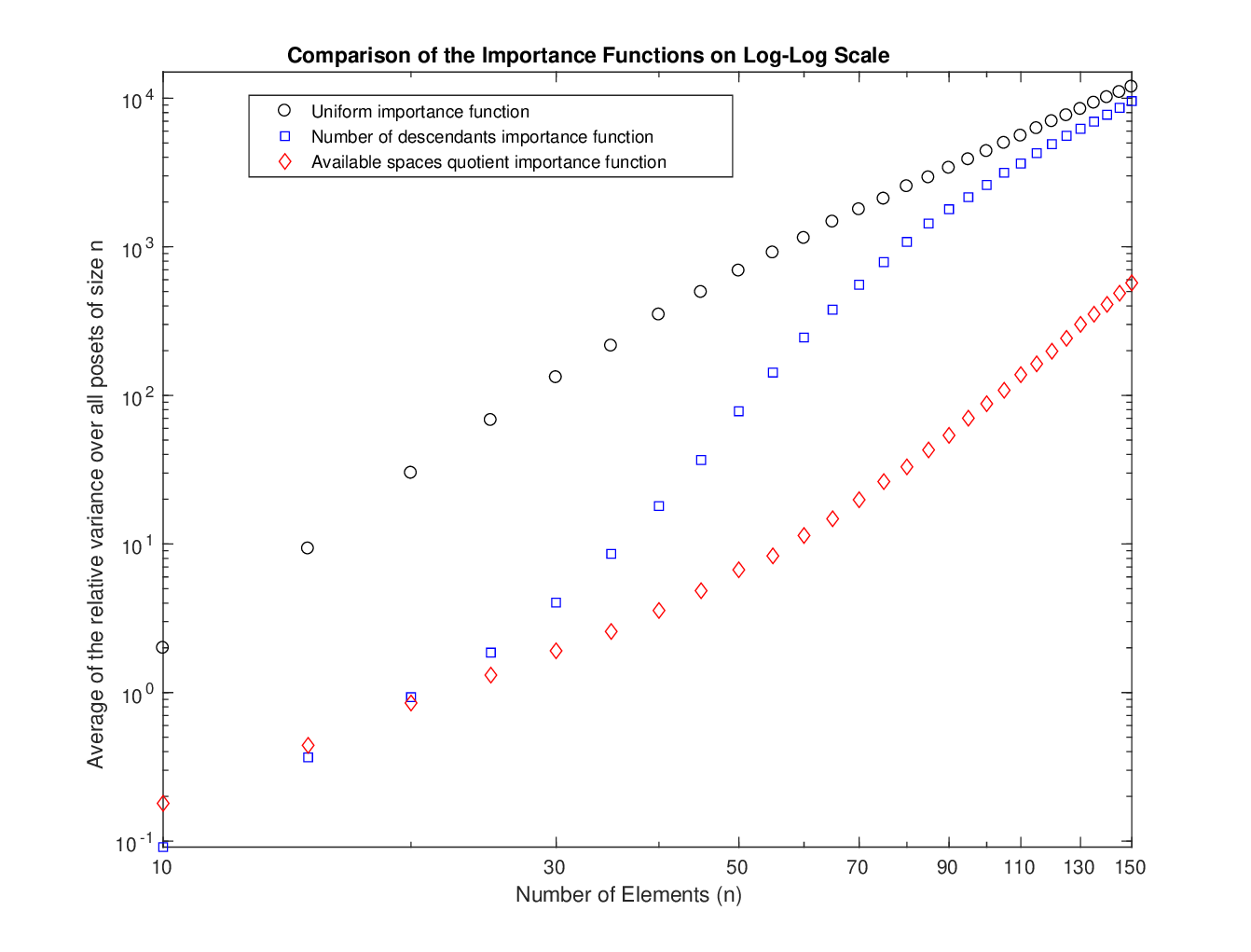}
\caption{Comparison of the average relative variance of the three importance functions, on a log-log scale.}\label{numresults1}
\end{figure}

Numerical tests of Algorithm 2, the Recursive Connected Components algorithm, were performed in the same manner as for Algorithm 1.  The results in Figure \ref{numresults2} show a modest reduction in the average relative variance for each of the three importance functions when recursion is used.  The relative size of this reduction appears to diminish as the poset size increases.

\begin{figure}[H]
\includegraphics[width=\textwidth]{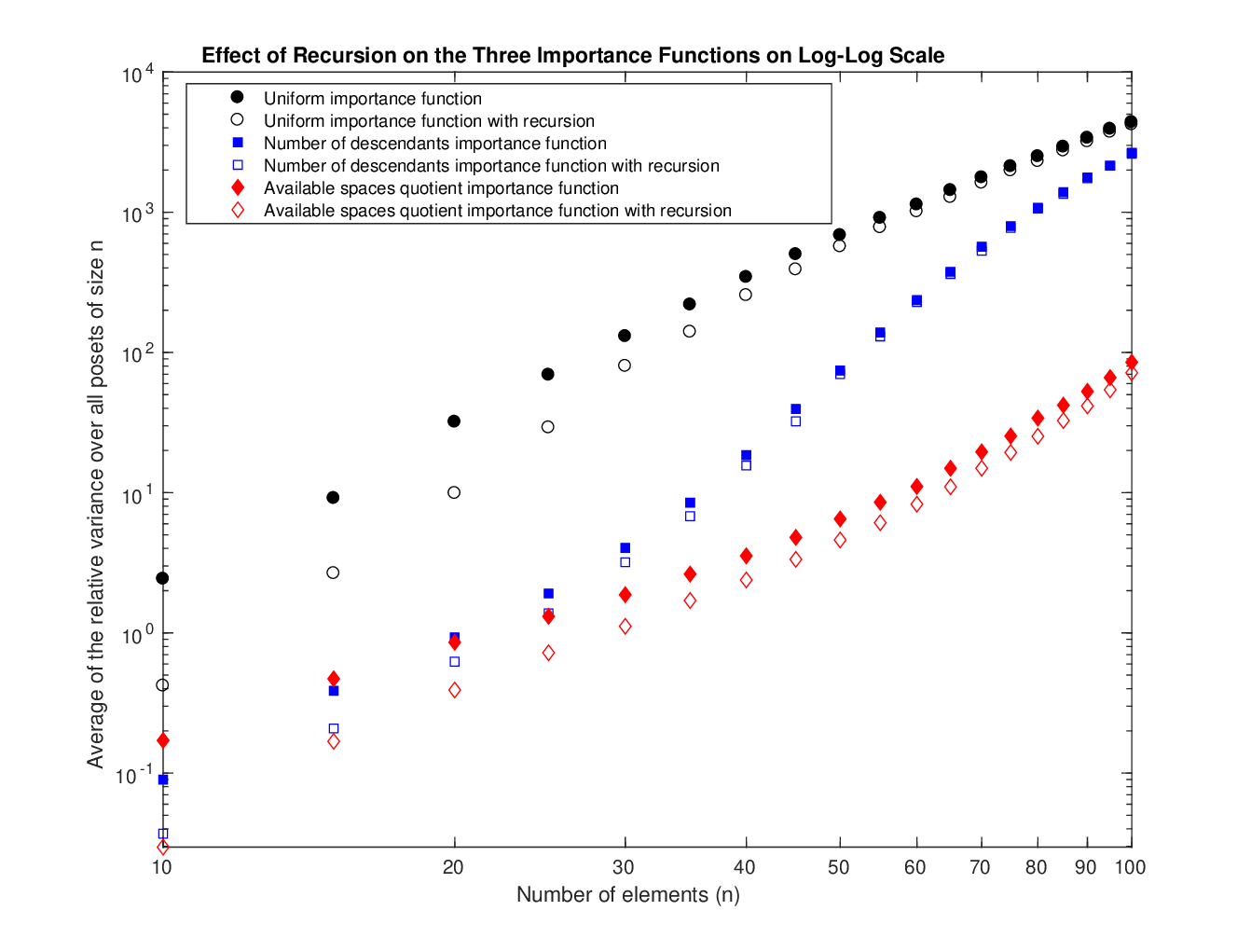}
\caption{Comparison of the Recursive Connected Components algorithm with the original algorithm for the three importance functions, on a log-log scale.}\label{numresults2}
\end{figure}

These numerical results show that the variance of our method does not grow too quickly, and that our importance functions significantly reduce the variance as compared to a uniform importance function.

\section*{Acknowledgements}
We thank Myra Deng and Amanda Crawford for work done on early versions of the algorithm.

\bibliographystyle{amsplain}
\bibliography{lxbib}

\end{document}